\documentclass{article}
\PassOptionsToPackage{numbers,sort&compress}{natbib}
\usepackage[preprint]{neurips_2026}

\usepackage[utf8]{inputenc}
\usepackage[T1]{fontenc}
\usepackage{amsmath,amssymb,amsthm}
\usepackage{booktabs}
\usepackage{graphicx}
\IfFileExists{microtype.sty}{\usepackage{microtype}}{}
\usepackage{xcolor}
\usepackage{hyperref}
\usepackage{url}
\hypersetup{colorlinks=true,allcolors=[rgb]{0.15,0.25,0.5}}

\newtheorem{proposition}{Proposition}
\newcommand{\kms}{\,\mathrm{km\,s^{-1}}}
\newcommand{\kpc}{\,\mathrm{kpc}}
\newcommand{\Msun}{M_\odot}
\newcommand{\vv}{\mathbf v}
\newcommand{\xx}{\mathbf x}

\title{Closed-Form of the Local Galactic Potential and\\
Stellar Distribution Function from Gaia DR3}

\author{%
  Indranil Das\thanks{Equal contribution.} \\
  Illinois Center for Advanced Studies \\
  of the Universe \& Dept.\ of Physics \\
  University of Illinois Urbana-Champaign \\
  Urbana, IL 61801, USA \\
  \texttt{idas3@illinois.edu} \\
  \And
  Adam Kamoski\footnotemark[1] \\
  Department of Physics \\
  University of Massachusetts Boston \\
  NSF Institute for AI and Fundamental Interactions \\
  \texttt{Adam.Kamoski001@umb.edu} \\
  \AND
  Dora Demiri \\
  European University of Tirana \\
  Tirana, Albania \\
  \texttt{ddemiri@uet.edu.al} \\
  \And
  Brianna Isola \\
  University of New Hampshire \\
  Durham, NH 03824, USA \\
  \texttt{Brianna.Isola@unh.edu} \\
  \AND
  Hanieh Karimi \\
  University of New Hampshire \\
  Durham, NH 03824, USA \\
  \texttt{Hanieh.karimi@unh.edu@unh.edu} \\
  \And
    Dmitrii S.~Zagorulia \\
  Lebedev Physical Institute \\
  Russian Academy of Sciences \\
  Moscow 119991, Russia \\
  \texttt{zagorulia.ds@phystech.edu} \\
}

\makeatletter
\renewcommand{\@noticestring}{}
\makeatother

\begin{document}
\maketitle

\renewcommand{\thefootnote}{\fnsymbol{footnote}}
\renewcommand{\thefootnote}{\arabic{footnote}}

\begin{abstract}
The local dark matter density determines the strength of the signal expected in direct-detection experiments, yet published estimates from stellar motions disagree by more than their errors \citep{read2014,schutz2018,widmark2019,guo2020,salomon2020,widmark2021}, and the most recent machine-learning analysis of Gaia data finds a local density consistent with zero \citep{kalda2025}. According to Jeans' theorem, a distribution function built from integrals of motion satisfies the collisionless Boltzmann equation (CBE) trivially for any choice of potential \citep{jeans1915,binney2008}, so a search that simultaneously fits the distribution function and the potential to the CBE identifies neither. Our pipeline instead estimates the distribution function in isolation, linearizing the equation in terms of accelerations and allowing for direct measurement of the local force field, and then fits closed forms to that field via symbolic regression. Throughout, we find that the usable information lies not in the CBE residual but in the stellar number counts, the observable most distorted by survey selection. Along the vertical profile, our recovered potential agrees with the classical self-gravitating isothermal disc.
\end{abstract}

\section{Introduction}
The Gaia DR3 catalog supplies six-dimensional phase-space coordinates for millions of stars \citep{gaiadr3}, motivating machine-learning recovery of the Galactic potential $\Phi$ and stellar distribution function $f$ from a single snapshot \citep{green2023,kalda2025,an2021}. Symbolic regression \citep{cranmer2020,cranmer2023} provides an interpretable, differentiable output that can be checked against independent measurements. The local dark matter density inferred from such analyses is an input to direct-detection experiments, yet published values span $0.005$--$0.020\,\Msun\mathrm{pc^{-3}}$ with error bars that do not
overlap \citep{read2014,schutz2018,widmark2019,guo2020,salomon2020,widmark2021},
and the most recent neural-network recovery reports a value consistent with zero \citep{kalda2025}. The difficulty is identifiability: the equation relating observation to target is the stationary CBE,
\begin{equation}
\vv\cdot\nabla_{\xx} f - \nabla_{\xx}\Phi\cdot\nabla_{\vv} f = 0,
\label{eq:cbe}
\end{equation}
and when $f$ and $\Phi$ are fit to it jointly, it fails to uniquely identify either; a model can match the data exactly without constraining $\Phi$. Our
pipeline avoids this by fixing $f$ before solving for $\Phi$, applied to $9.9\times10^6$ stars within $1\kpc$.

This work proceeds in three parts: In Sec.~\ref{sec:degen} we provide a quantified analysis of the degeneracy, which consistently puts more than $96\%$ of the constraint on $\Phi$ in the spatial density. In Sec.~\ref{sec:method} we directly measure the acceleration field, with no assumed potential, and we find closed forms for both unknowns under a Poisson-positivity constraint, with a controlled experiment showing the constraint is necessary. In Sec.~\ref{sec:results} we recover a distribution function supervised so that the degenerate solution is no longer the optimum but the uninformative baseline.

%
\section{Degeneracy of the potential function}
\label{sec:degen}
%
\begin{proposition}[Minimizing the residual cannot identify $\Phi$]
\label{prop:degen}
For any static and axisymmetric $\Phi$, $E=\tfrac12|\vv|^2+\Phi$ and $L_z=Rv_\phi$ are integrals of motion. Here, $E$ is the specific orbital energy and $L_z$ is the z-component of angular momentum. Then for any smooth $h$, the choice $f=h(E,L_z)$ satisfies Eq.~\eqref{eq:cbe} exactly. In
particular $f=\mathrm{const}$ satisfies Eq.~\eqref{eq:cbe} for any $\Phi$. The minimum of the residual is therefore zero on a set containing every potential.
\end{proposition}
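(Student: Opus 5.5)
The plan is to introduce the Liouville (streaming) operator $\mathcal{L}_\Phi g \equiv \vv\cdot\nabla_{\xx} g - \nabla_{\xx}\Phi\cdot\nabla_{\vv} g$, so that Eq.~\eqref{eq:cbe} reads $\mathcal{L}_\Phi f=0$. We first show that $E$ and $L_z$ lie in its kernel, and then pass to arbitrary $h$ using the chain rule. Write $\xx=(R,\phi,z)$ in cylindrical coordinates, with $\Phi=\Phi(R,z)$ independent of $t$ and $\phi$.

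\emph{Step 1 (energy).} We have $\nabla_{\xx}E=\nabla_{\xx}\Phi$ and $\nabla_{\vv}E=\vv$. Hence $\mathcal{L}_\Phi E=\vv\cdot\nabla_{\xx}\Phi-\nabla_{\xx}\Phi\cdot\vv=0$. This uses only staticity, because a time-dependent $\Phi$ would leave a residual $\partial_t\Phi$.

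\emph{Step 2 (angular momentum).} In Cartesian form $L_z=xv_y-yv_x$, so
\[
\mathcal{L}_\Phi L_z=(v_xv_y-v_yv_x)-\bigl(\partial_x\Phi\,(-y)+\partial_y\Phi\,x\bigr)=-(x\partial_y-y\partial_x)\Phi=-\partial_\phi\Phi .
\]
This vanishes by axisymmetry. This is the one place where a small computation is needed. The main thing to watch is that $L_z=Rv_\phi$ must be expressed in Cartesian velocity components, so that $\nabla_{\vv}$ is the flat gradient. The same argument shows that $L_z$ is the Noether charge of the rotational symmetry.

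\emph{Step 3 (Jeans' theorem, easy direction).} $\mathcal{L}_\Phi$ is a first-order linear differential operator, so it is a derivation. For smooth $h$ this gives $\mathcal{L}_\Phi h(E,L_z)=\partial_E h\,\mathcal{L}_\Phi E+\partial_{L_z}h\,\mathcal{L}_\Phi L_z=0$. Taking $h$ constant gives $f=\mathrm{const}$, and then both terms of Eq.~\eqref{eq:cbe} vanish identically for every $\Phi$. No axisymmetry or staticity is needed in that case.

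\emph{Step 4 (conclusion).} The residual functional $\mathcal{R}(f,\Phi)=\|\mathcal{L}_\Phi f\|^2$, in any nonnegative norm, is bounded below by $0$. The pair $(\mathrm{const},\Phi)$ attains $0$ for every $\Phi$. So the set of minimizers projects onto the whole space of potentials, and minimizing the residual cannot identify $\Phi$.

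There is no real obstacle here; Step 2 is the only calculation. One care point is the domain: if the paper's potentials are only required to be smooth on the region within $1\kpc$, that is all the argument uses, since every step is pointwise.
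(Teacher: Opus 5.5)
Your proof is correct and follows the paper's route. The paper simply cites Jeans' theorem, while you verify its easy direction explicitly: you show $\mathcal{L}_\Phi E=0$ and $\mathcal{L}_\Phi L_z=-\partial_\phi\Phi=0$, apply the chain rule, and then in Step 4 correctly convert the fact that every $(\mathrm{const},\Phi)$ pair attains zero into the non-identifiability conclusion.
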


This follows from Jeans' theorem \citep{jeans1915,binney2008}. A residual-only symbolic search quickly converges to a simple but physically unreasonable zero-loss solution (complexity 2, both components constant). We therefore instead profile $\chi^2$ over the uniform volume density $\rho_{\rm DM}$, which we scan on the interval $\rho_{DM}\in[0,20]\times10^{-3}M\odot\text{pc}^{-3}$ on mocks with known truth (Table~\ref{tab:robustness}). When the distribution function is allowed two free exponential components, the difference between exclusion and inclusion of the spatial density $\nu(z)$ is reflected in a $\chi^2$ span of 0.7 vs. one of 73. For the baseline mock the decomposition is CBE residual $0\%$, conditional velocity shape ${\sim}1\%$, and spatial density ${\sim}99\%$. $\nu(z)$ remains significant under every configuration tested, carrying more than $96\%$ of the constraint in each:
tracer temperature ($12$-$60\kms$), height range ($0.5$-$2\kpc$), and distribution-function family, including King-like truncated models
(Table~\ref{tab:robustness}). This is because for an isothermal tracer $\nu$ depends on $\Phi$ exponentially ($\nu\propto e^{-\Phi/\sigma^2}$) \citep{kuijken1989,read2014,bovyrix2013}, whereas the velocity shape enters only through moments that $f$ can absorb. Since $\nu$ is the observable corrupted by survey completeness, the selection function must be modeled explicitly. When a completeness gradient $S\propto e^{-|z|/\ell}$ is present in a mock and omitted from the model, the recovered surface density shifts by $\sigma^2/2\pi G\ell$, at a loss floor indistinguishable from the clean run with $S=1$
(App.~\ref{app:mocks}).

\begin{proposition}[Two-integral models are meridionally isotropic]
\label{prop:aniso}
For static axisymmetric $\Phi$, every $f(E,L_z)$ satisfies $\sigma_R=\sigma_z$ exactly at every point.
\end{proposition}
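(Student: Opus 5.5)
The plan is to compute the second velocity moments directly at a fixed point $\xx=(R,\phi,z)$. Since $\Phi(R,z)$ is a constant there, I change variables from Cartesian-cylindrical velocities $(v_R,v_\phi,v_z)$ and observe that $f=h\bigl(\tfrac12(v_R^2+v_\phi^2+v_z^2)+\Phi(R,z),\,Rv_\phi\bigr)$ depends on $(v_R,v_z)$ only through the combination $v_R^2+v_z^2$. In other words, at fixed $v_\phi$ the function $f$ is invariant under rotations in the $(v_R,v_z)$ plane. Proposition~\ref{prop:degen} guarantees that any such $h$ is a legitimate stationary solution, so no further restriction on $h$ is needed beyond integrability of the relevant moments. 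I will assume the density $\nu=\int f\,d^3v$ is positive and finite at $\xx$, and that $\int f\,|\vv|^2\,d^3v<\infty$.

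\textbf{Step 1: the means vanish.} The map $v_R\mapsto -v_R$ leaves $f$ unchanged, and so does $v_z\mapsto -v_z$. Hence $\langle v_R\rangle=\langle v_z\rangle=0$, and $\sigma_R^2=\langle v_R^2\rangle$, $\sigma_z^2=\langle v_z^2\rangle$. The azimuthal mean $\langle v_\phi\rangle$ need not vanish, but it plays no role here.

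\textbf{Step 2: compare the second moments.} I write
\[
\nu\sigma_R^2=\int dv_\phi\int\!\!\int v_R^2\, f\,dv_R\,dv_z,
\qquad
\nu\sigma_z^2=\int dv_\phi\int\!\!\int v_z^2\, f\,dv_R\,dv_z .
\]
In the inner integral I apply the rotation $(v_R,v_z)\mapsto(v_z,-v_R)$. This rotation has unit Jacobian and preserves $v_R^2+v_z^2$, so it preserves $f$, while it exchanges $v_R^2$ and $v_z^2$. The inner integrals are therefore equal for every $v_\phi$. Equivalently, I can pass to polar coordinates $v_R=u\cos\theta$, $v_z=u\sin\theta$ and use $\int_0^{2\pi}\cos^2\theta\,d\theta=\int_0^{2\pi}\sin^2\theta\,d\theta$. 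Integrating over $v_\phi$ and dividing by $\nu$ gives $\sigma_R=\sigma_z$. The same symmetry also gives $\langle v_Rv_z\rangle=0$, so the meridional velocity ellipsoid is in fact a circle. Since $\xx$ was arbitrary, the identity holds at every point.

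\textbf{Main obstacle.} The computation itself is routine. The only delicate point is justifying Fubini and the change of variables, which requires $f\,|\vv|^2$ to be integrable. I would state this as a standing assumption, since otherwise the dispersions are not defined. I would also remark that the argument uses nothing about $\Phi$ beyond its value at the single point $\xx$. This explains why the statement is exact and independent of the potential: the result is a structural limitation of the two-integral family, which cannot reproduce the observed $\sigma_R\neq\sigma_z$ without a third integral.
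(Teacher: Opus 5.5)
Your proposal is correct and is essentially the paper's argument: at fixed position $f$ depends on $(v_R,v_z)$ only through $v_R^2+v_z^2$, so a unit-Jacobian symmetry of the $(v_R,v_z)$ plane that exchanges $v_R^2$ and $v_z^2$ gives $\langle v_R^2\rangle=\langle v_z^2\rangle$. The paper uses the swap $v_R\leftrightarrow v_z$, which also gives $\langle v_R\rangle=\langle v_z\rangle$ implicitly. You instead use a $90^\circ$ rotation and dispose of the means separately via reflections, a harmless and slightly more explicit variant.
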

\vspace{-0.5em}
\begin{proof}
$E$ depends on $v_R$ and $v_z$ only through $v_R^2+v_z^2$, and $L_z$ not at all, so
both are invariant under the exchange $v_R\!\leftrightarrow\!v_z$. Therefore, so is $f$,
and $\langle v_R^2\rangle=\langle v_z^2\rangle$ at fixed $(R,z)$. 
\end{proof}

The two-integral family that minimizes the residual is the same family constrained to $\sigma_R=\sigma_z$. Hence, a measured anisotropy $\sigma_R\neq\sigma_z$ distinguishes our solution from it. This does not distinguish it from every stationary solution, since a three-integral $f(E,L_z,I_3)$ satisfies Eq.~\eqref{eq:cbe} identically and is generically anisotropic. An exactly separable third integral requires $\Phi$ of St\"ackel form, which we neither impose nor test for (App.~\ref{app:abl}). Therefore, we expect a non-zero residual.

%
\section{Method}
\label{sec:method}
\textbf{Selection.} The $1\kpc$ sphere holds $9.94\times10^6$ stars with \texttt{RUWE}$<1.4$, parallax S/N $>10$, $\ge4$ radial-velocity transits, and velocity uncertainty $<10\kms$, split 70/15/15 before any fitting (in-sample
selection picks the wrong front member on mocks; App.~\ref{app:abl}). Gradients of $\ln f$ are evaluated on a $2.5\times10^6$-star subsample; $1.73\times10^6$ of those fall in the 94 acceleration cells, and the symbolic search for $\ln f$ is fitted on $2.0\times10^4$ rows drawn from the $2.21\times10^6$ stars within $0.9\kpc$. We use $R_0=8.122\kpc$ and $z_\odot=0$, the latter by construction of the catalogue frame (App.~\ref{app:data}). The observed number density per unit volume falls to $5.1\%$ of its value at $0.15\kpc$ across the sphere, with $|\mathrm d\ln S/\mathrm dd|$ peaking at $10.9\kpc^{-1}$ against a physical $|\partial_z\ln\nu|$ of $3.0\kpc^{-1}$ (Fig.~\ref{fig:selection}). We fit $\mu=V\!\cdot\!S(d)A(\ell,b)\nu(R,z)$ as a Poisson GLM. The three factors are identifiable up to two overall multiplicative constants because they are different functions of the same point: stars at equal distance in different directions lie at different heights. It returns a $2.88\kpc$ radial scale length
and a two-scale-height vertical profile whose local scale height at $300$~pc is $334$~pc, neither constrained to a literature value. The complete pipeline and code are available at \url{https://github.com/briannaisola/GaiaSR}.

\textbf{Empirical $f$.} We factorize $f=\nu\,p(\vv|\xx)$ and fit the conditional with a 96-component full-covariance Gaussian mixture, whose log-gradient is closed-form and agrees with central finite differences to $10^{-6}$ in all six coordinates. The conditional is selection-free:
$S(\xx)f/\!\int\!S(\xx)f\,\mathrm d^3v=f/\!\int\!f\,\mathrm d^3v$, since $S$ has no velocity dependence, so components are selected on held-out conditional log-likelihood. Refitting on a disjoint set of $5\times10^5$ stars moves
$\partial_R\ln f$ by $0.815$ against its own root mean square (rms) of $1.673$; refitting with Gaia's
errors injected a second time moves it by $0.736$. The two shifts are comparable
in size, so the limiting factor is the density estimator, and the implied residual floor is $48.7\kms\kpc^{-1}$ for any expression carrying its gradients. The azimuthal streaming term carries $93\%$ of the total rms ($222.2$ against
$67.4$ and $41.7\kms\kpc^{-1}$), is measured over only a $14^\circ$ baseline in
azimuth, and cannot be balanced by any axisymmetric force; dropping it moves
$\mathrm dv_c/\mathrm dR$ from $-9.0$ to $-1.1$ against a literature
$-1.7\pm0.1$ \citep{eilers2019}.

\begin{figure}[t]\centering
\includegraphics[width=\textwidth]{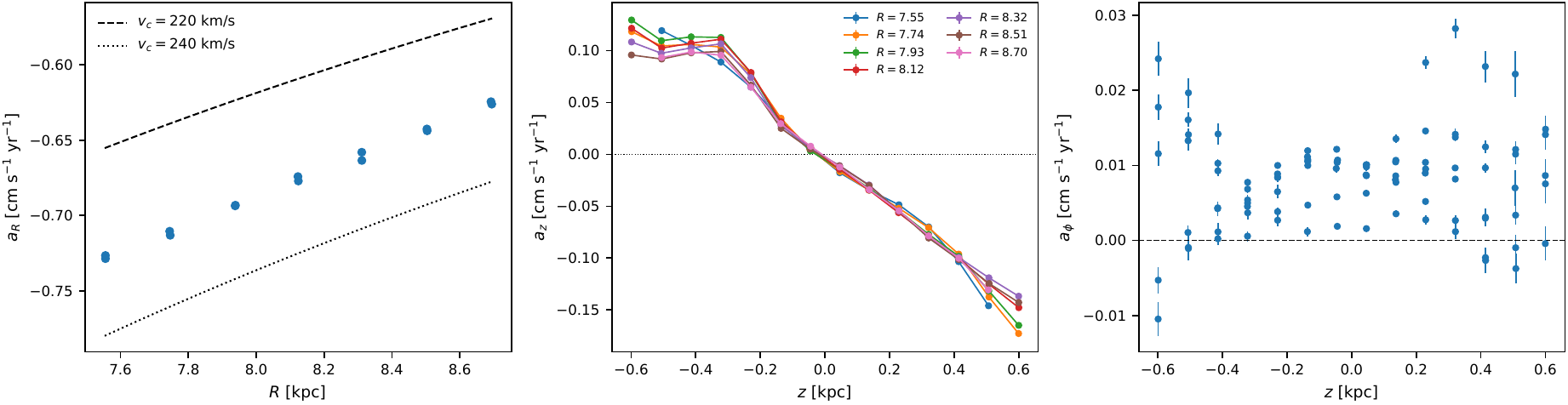}
\caption{The acceleration field $a$, measured cell by cell with no functional form
assumed for $\Phi$. Each point is one spatial cell solved from thousands of linear
equations, Eq.~\eqref{eq:split}; error bars on the rightmost plot are the cell-fit standard errors.
Left: the radial component $a_r$ at the midplane, with constant-$v_c$ curves for scale.
Center: the vertical component $a_z$, colored by radius, changing sign at the dynamical
midplane. Right: the azimuthal component $a_\phi$, which an axisymmetric $\Phi$ requires to vanish and which instead comes out systematically positive at the per-cent level of $a_R$.}
\label{fig:accel}
\vspace{-0.6em}
\end{figure}

%
\textbf{Acceleration Measurement.} Eq.~\eqref{eq:cbe} is linear in $\mathbf
a$. In cylindrical coordinates with $g=\ln f$, grouped by whether a term
involves the potential,
\begin{equation}
\underbrace{v_R\partial_R g+v_z\partial_z g+\tfrac{v_\phi^2}{R}\partial_{v_R}g
-\tfrac{v_Rv_\phi}{R}\partial_{v_\phi}g}_{\textstyle K_i}\;+\;\mathbf
W_i\!\cdot\!\mathbf a\;=\;0,
\label{eq:split}
\end{equation}
with $\mathbf W_i=\nabla_{\vv}\,g$ and the azimuthal streaming term dropped as
above. The centrifugal and Coriolis terms arise from the rotation of the cylindrical
basis as a star moves, and are fixed by the star's own coordinates. 
We solve $94$ acceleration cells ($7$ radial bins $\times$ $14$ vertical bins $-4$ star-count-cut cells) by Huber-reweighted least squares, assuming no functional form for $\Phi$ (Fig.~\ref{fig:accel}). The field comes out at $\sim0.6\,\mathrm{cm\,s^{-1}\,yr^{-1}}$, the acceleration
undetectable in any single star over the mission lifetime but recoverable from the collective statistics of many. From it follow $v_c=\sqrt{-a_RR}$,
$\Sigma(<|z|)=|a_z|/2\pi G$, and the dynamical midplane where $a_z$ changes sign. The whole volume as one cell gives $v_c=231.4\kms$ against $229\pm3$ \citep{eilers2019}; the cells give $\Sigma(<0.5\kpc)=44.0\,\Msun\mathrm{pc^{-2}}$, between the $41$ and $65\pm6$ measured at $0.35$ and $0.8\kpc$ \citep{holmberg2004}, and place the midplane $18$~pc from
the Sun against a photometric $20.8\pm0.3$~pc \citep{bennett2019}. 

The quoted errors are least-squares statistical errors on millions of stars, and they understate the real uncertainty: nothing below about a per cent here is resolved. Solving for $a_\phi$ rather than imposing $a_\phi=0$ bounds the departure from axisymmetry at $1.3\%$ of $|a_R|$, comparable to that floor, and we return to it in the limitations.

%
\textbf{Symbolic $\Phi$.} We search with PySR \citep{cranmer2023,pysrtemplates}.
Each cell enters the training set twice, tagged by an indicator, so a single
$\Phi$ must reproduce both force components and the fitted field is curl-free by
construction. $5.24\ln R\pm8z^2$ ($\nabla^2\Phi=\pm16$) generate equally valid acceleration fields $a$ that the data cannot tell apart, and only the sign of the Laplacian distinguishes them. Hence, a Poisson positivity constraint becomes necessary, entering the objective as
$\mathcal L_\Phi=\sigma^{-2}(\text{pred}-a_{\rm tgt})^2+
\lambda[\min(\nabla^2\Phi,0)]^2$ with $\lambda=25$, one-sided so that it vanishes on the admissible set and cannot bias the choice among physical candidates. From the Pareto front we take the cheapest expression within a factor of two of the best admissible loss, subject to (1) $\rho>-0.005$ throughout the $(R,z)$ domain where Poisson positivity is enforced and (2) a plausibility window $\rho(R_0,0)\in[0,0.5]\,\Msun\mathrm{pc^{-3}}$, $v_c\in[150,320]\kms$. This
window is wide enough to exclude only nonsense: all 21 of the 26 front rows surviving Poisson positivity already lie inside it, and removing it returns the
same expression. Selection is therefore by positivity and parsimony, and the window did not manufacture the $v_c$ and $\rho(R_0,0)$ reported below.

%
\textbf{Closed-form $f$.} By Prop.~\ref{prop:degen}, the residual cannot serve as the objective. We instead
regress $g$ onto the empirical $\ln\hat f$ and its five gradients. The loss is
$\mathcal L_f=\langle((g-\ln\hat f)/S_g)^2\rangle+\tfrac15\sum_i\langle((\partial_i g
-\partial_i\ln\hat f)/S_i)^2\rangle$, averaged over stars, where each $S$ is the
rms of the target it normalizes. The gradients are supervised because Eq.~\eqref{eq:split} absorbs them, and because an expression can track a function's values closely while its derivatives wander. A constant $g$ scores exactly $2.000$, one from the value term and one from the five gradient terms, whose model gradients vanish. An expression carrying no information about the target can get a score as low as $2.000$ under this loss. The same expression scores $0$ under a residual objective, where it wins. Selection is on $R^2$ against $\ln\hat f$, taken over Pareto front rows that are finite on more than $99\%$ of held-out stars. The residual and the anisotropy are computed afterwards
and enter no decision. Two of the five supervised gradients are $\partial_{v_R}\ln\hat f$ and $\partial_{v_z}\ln\hat f$, which carry $\sigma_R$
and $\sigma_z$ for a locally Gaussian conditional. The anisotropy reported below
is therefore derived from supervised quantities. Front members of comparable loss return ratios between
$0.91$ and $9.90$ (Table~\ref{tab:front}), so reproducing the supervised
gradients pointwise does not by itself reproduce the ellipsoid they imply.

\section{Results}
\label{sec:results}
\begin{figure}[t]\centering
\includegraphics[width=0.94\textwidth]{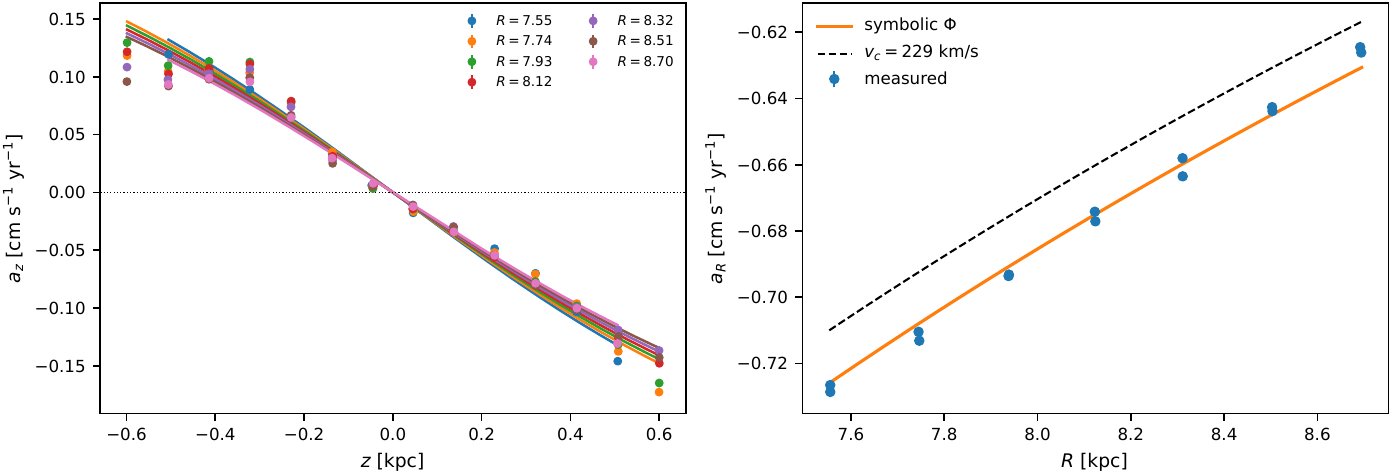}
\caption{Two plots representing the closed-form potential $\Phi(R,z)$ of Eq.~\eqref{eq:phi}, containing eleven nodes and two
fitted constants, against the independently measured field of
Fig.~\ref{fig:accel}. Left: vertical acceleration $a_z$, points measured and lines the
symbolic $\Phi$, colored by radius $R$. Right: radial acceleration $a_R$ at the midplane,
with a constant-$v_c$ curve at the literature value for comparison. The fit is to
the measured field alone.}
\label{fig:money}
\vspace{-0.6em}
\end{figure}
The potential comes out at complexity 11 as
\begin{equation}
\Phi(R,z)=5.3618\,\ln\!\big(R+0.39193\,\ln\cosh z\big),
\label{eq:phi}
\end{equation}
with $R,z$ in kpc and $\Phi$ in $(100\kms)^2$.
The recovered expression contains a \(\ln\cosh z\) vertical dependence characteristic of the potential of a self-gravitating isothermal sheet, whose density profile \(\nu\propto\mathrm{sech}^2(z/2h)\) was derived by \citet{spitzer1942}. Notably, this functional form emerged from the free operator set rather than being imposed a priori.

The distribution function comes out at complexity 44 as
\begin{align}
\ln f = 2.407
&-\ln\cosh(8.523\,v_z)
-\ln\cosh\!\big(z\ln\cosh(v_\phi\ln R)\big)\nonumber\\
&-\ln\cosh\!\big[(\ln\cosh v_\phi-1.625)(\ln\cosh(5.251\,v_z)-R+1.385)\big]\nonumber\\
&-\ln\cosh\!\big(\ln\cosh(2.109\,v_Rv_\phi)\big).
\label{eq:lnf}
\end{align}
The selected expression is non-separable in the velocity components: $v_R$ enters through the product $v_Rv_\phi$, while the large-$|v_z|$ behavior is approximately exponential.

\begin{table}[t]
\begin{minipage}[t]{0.49\textwidth}\centering\small
\setlength{\tabcolsep}{3pt}
\caption{Pareto front for $\ln f$, held out; every fifth member plus the two
endpoints, from 37 (full front in App.~\ref{app:abl}). Row two is
$g=\mathrm{const}$, the exact global optimum of a residual objective, present on
our own front and rejected on $R^2$. $\epsilon_\nabla$ is the mean relative
gradient error.}
\label{tab:front}
\vspace{-0.2em}
\begin{tabular}{rrrrrr}
\toprule
$c$ & loss & $R^2$ & $\epsilon_\nabla$ & CBE & $\sigma_R/\sigma_z$\\
\midrule
1 & 2.016 & $-8\!\times\!10^{-3}$ & 1.00 & 6.2 & 1.14\\
2 & 2.000 & $-3\!\times\!10^{-5}$ & 1.00 & \textbf{0.00} & 0.91\\
9 & 1.352 & 0.510 & 0.90 & 41.7 & 8.11\\
10 & 1.313 & $-2\!\times\!10^{9}$ & $3\!\times\!10^{4}$ & $1\!\times\!10^{8}$ & 9.90\\
20 & 0.879 & 0.745 & 0.76 & 51.9 & 1.37\\
28 & 0.691 & 0.876 & 0.74 & 41.7 & 1.87\\
39 & 0.603 & 0.917 & 0.70 & 36.1 & 2.13\\
\textbf{44} & \textbf{0.589} & \textbf{0.923} & \textbf{0.69} & \textbf{38.2} & \textbf{1.88}\\
\bottomrule
\end{tabular}
\end{minipage}\hfill
\begin{minipage}[t]{0.49\textwidth}\centering\small
\setlength{\tabcolsep}{3pt}
\caption{Held-out results. $\sigma_R/\sigma_z$ is measured from the same sample,
not literature; the rest are independent
\citep{eilers2019,bennett2019,mckee2015,read2014, holmberg2004}. Two of the five external
comparisons fail, marked $\dagger$. Uncertainties are discussed in the text and
are not the statistical errors. $\ast$: measured from the same sample.
$\ddagger$: the supervising mixture on the same axisymmetric target. CBE residual
in $\kms\kpc^{-1}$.}
\label{tab:results}
\vspace{-0.2em}
\begin{tabular}{lrr}
\toprule
Quantity & This work & Reference\\
\midrule
$\sigma_R/\sigma_z$ & $1.83$--$2.13$ & $1.893^\ast$\\
$v_c(R_0)$ & $231.6\kms$ & $229\pm3$\\
$\Sigma(<0.5\kpc)$ & $44.0$ & 41--65 (0.35-0.8 kpc)\\
midplane offset & 18 pc & $20.8\pm0.3$\\
$\mathrm dv_c/\mathrm dR^\dagger$ & $-1.1$ & $-1.7\pm0.1$\\
$\rho(R_0,0)^\dagger$ & $0.048$ & 0.084--0.10\\
\midrule
CBE resid. & 38.2 & $58.6^\ddagger$\\
rel.\ to $\Phi{=}0$ & 0.204 & 1.000\\
$R^2(\ln f)$ & 0.923 & ---\\
\bottomrule
\end{tabular}
\end{minipage}
\vspace{-0.8em}
\end{table}

The recovered field reproduces three of the five external comparisons in Table~\ref{tab:results}. Table~\ref{tab:front} shows that the complexity-2 constant solution achieves zero CBE residual but no meaningful fit to the empirical distribution ($R^2=-3\times10^{-5}$), demonstrating the failure of residual-only model selection.
$R^2$
then rises with complexity, though not monotonically along the whole front,
and the meridional anisotropy settles into the range $1.83$--$2.13$ over the top
twelve members, mean $1.94$, bracketing the measured $1.893$. The
maximum-$R^2$ member reads $1.88$, but we quote the spread, since no member
reaches the observed ratio to better than the $5\%$ scatter of the front. The recovered $\Phi$ and $\ln f$ predict this ellipsoid because two of the
supervised gradients carry it (Sec.~\ref{sec:method}). That they come out at the right value shows the fit preserved the velocity shape it was given.

$\Phi=0$ gives exactly $1.000$, so $0.204$
means the potential terms cancel most of the variance of the remaining,
axisymmetric part of Eq.~\eqref{eq:split}. This residual is evaluated with the
azimuthal streaming term $\tfrac{v_\phi}{R}\partial_\phi g$ removed, which no
axisymmetric force can balance and which alone carries $93\%$ of the signal
(Sec.~\ref{sec:method}). With it retained the relative residual is $0.76$. The
value $38.2$ lies below the $58.6$ scored on the same axisymmetric target by the
mixture that supervised it, because a smooth expression cannot reproduce estimator
noise. The $\Phi$ residual is constant to $0.5\%$ across the train, validation,
and test splits, from two fitted constants against $1.7\times10^6$ evaluation
stars, though the symbolic search itself saw only $2\times10^4$.

\textbf{Ablations} (App.~\ref{app:abl}). Separable potentials
$\Phi=\Phi_R(R)+\Phi_z(z)$ occupy complexity 9--10 and are $43\%$ worse than the
coupled form, which predicts $a_z$ falling $22\%$ across the sample where a
separable form predicts no change; since $E_z$ is conserved only if
$\partial^2\Phi/\partial R\,\partial z=0$, the recovered coupling is also why
$E_z$ is not available as a third integral here. Among vertical profiles fitted
freely, $\ln\cosh$ ($\chi^2/\mathrm{dof}=275.4$) is matched by
$\sqrt{z^2+h^2}-h$ ($275.8$) and beaten by nothing, while any form with a midplane
kink is $10\times$ worse. Rescaling inputs by measured scales moves the residual to
$23.69$ and $R^2$ to $0.937$ at identical complexity. The observed anisotropy is
stable to $\pm1\%$ across apertures from $|z|<0.05$ to $|z|<0.30\kpc$.

\textbf{Where it departs.} The local density is $\rho=0.048$ against a literature
$0.084$--$0.10\,\Msun\mathrm{pc^{-3}}$ \citep{mckee2015,read2014}. Evaluating $\rho=-(4\pi G)^{-1}[\partial_Ra_R+a_R/R+\partial_za_z]$ on the measured field, with no symbolic expression involved, gives $0.0493$: the fit reproduces the field to $3\%$, so the deficit is already present in the accelerations and enters
upstream through the density estimator, exactly where the mocks place it, a
smoothed $\nu$ moving the recovered split from its clean value $(\Sigma,\rho)=(50,0.010)$ to
$(66,0.003)$ because $\Sigma$ integrates the vertical force while $\rho$
differentiates it (App.~\ref{app:mocks}).

\section{Conclusion}

We present closed-form models of the local Galactic potential and stellar
distribution function inferred from the Gaia DR3 catalog. Controlled mock tests show
that stellar number counts provide the main constraint on the potential,
whereas jointly fitting $f$ and $\Phi$ to the stationary collisionless
Boltzmann equation alone remains degenerate. From the completeness-corrected
distribution function, we reconstruct the acceleration field to which
the symbolic potential is fitted under a Poisson-positivity constraint.
The recovered potential contains a $\ln\cosh$ term characteristic of a
self-gravitating isothermal sheet. Three of five external benchmarks are
broadly reproduced, and the top Pareto-front models bracket the observed
velocity anisotropy. Steady state and axisymmetry are assumed within $1\kpc$, despite a
non-zero measured azimuthal acceleration. The $\ln\cosh$ term has a fixed
vertical scale of $2h=1\kpc$. Although the surface density is approximately
recovered, the local density is underestimated by about a factor of two,
a deficit already present in the empirical acceleration field.
The selection model has an $11\%$ systematic in $\partial_z\ln\nu$
that dominates the uncertainty in $\Phi$.

\section*{Acknowledgements}
We thank Akshay Ghalsasi for discussions, and Miles Cranmer and Jose M.~Munoz for
posing the problem. We thank IAIFI for inspiration and for facilitating the work through computing resources provided during during and after the Hackathon where the problem was posed. We thank NCSA and ACCESS/PSC for computing resources.

This work has made use of data from the European Space Agency (ESA) mission
\emph{Gaia} (\url{https://www.cosmos.esa.int/gaia}), processed by the \emph{Gaia}
Data Processing and Analysis Consortium (DPAC,
\url{https://www.cosmos.esa.int/web/gaia/dpac/consortium}). Funding for the DPAC
has been provided by national institutions, in particular the institutions
participating in the \emph{Gaia} Multilateral Agreement.

{\small
\setlength{\bibsep}{0pt plus 0.3ex}
\bibliographystyle{unsrtnat}
\bibliography{biblio}}

\clearpage
\appendix
\setcounter{table}{0}\renewcommand{\thetable}{A\arabic{table}}
\setcounter{equation}{0}\renewcommand{\theequation}{A\arabic{equation}}
\setcounter{figure}{0}\renewcommand{\thefigure}{A\arabic{figure}}

\section{Mock anatomy of the degeneracy}
\label{app:mocks}

\textbf{Setup.} Our mock inputs are
$\Phi(z)=2\pi G\Sigma(2h)\ln\cosh(z/2h)+2\pi G\rho_{\rm DM}z^2$, $\Sigma=48\,\Msun\mathrm{pc^{-2}}$, $h=0.20\kpc$, and $\rho_{\rm DM}=0.010\,\Msun\mathrm{pc^{-3}}$. When run clean without the injected completeness gradient, we recover $(\Sigma,\rho_{\rm DM})=(50,0.010)$ from this mock. The true distribution function is
$F(E)=0.7\,e^{-E/\sigma_1^2}+0.3\,e^{-E/\sigma_2^2}$, with
$(\sigma_1,\sigma_2)=(18,40)\kms$ and $E=v^2/2+\Phi$.

\textbf{Profiling procedure.} For each trial $\rho_{\rm DM}$ we fix $\Phi$ and
optimize $F$ (as $\sum_i a_ie^{-E/\sigma_i^2}$, $a_i\geq0$, $\sum a_i=1$) to
minimize $\chi^2$ against the conditional dispersion and kurtosis
$(\sigma_z,\kappa)$ in 10 height bins over $0.05$--$0.95\kpc$, with assigned
errors of $2\%$ and $3\%$; when the spatial density is included the profile
scores additionally against $\nu(z)$ normalized at $z=0$ with $2\%$ errors.
Eight random restarts ensure convergence. These $\Delta\chi^2$ spans are computed
at fixed assumed errors and are heuristic; the percentages depend on the relative
error scaling, though the ordering $\nu\gg P(v|z)$ is insensitive to it over the
tested $1$--$5\%$ range. Table~\ref{tab:robustness} gives the
full sweep behind the decomposition of Sec.~\ref{sec:degen}.

\begin{table}[h]\centering\small
\caption{The sensitivity hierarchy across mock configurations: $\Delta\chi^2$
span over $\rho_{\rm DM}$ from the conditional velocity shape alone against the
shape plus spatial density, with the distribution function free (three
exponential components unless noted). Spans are calculated over the interval $\rho_{DM}\in[0,20]\times10^{-3}M\odot\text{pc}^{-3}$. The $\nu(z)$ share is
$1-\text{span}_v/\text{span}_{+\nu}$. Each block varies one axis about the
baseline and the two sweeps were run independently, so the baseline row differs
between them. The $\nu(z)$ share stays above $96\%$ throughout and above $98\%$
for every aperture $|z|_{\max}\geq0.75\kpc$.}
\label{tab:robustness}
\begin{tabular}{lrrr}
\toprule
Configuration & Span $P(v|z)$ & Span $+\,\nu(z)$ & $\nu(z)$ share\\
\midrule
\multicolumn{4}{l}{\emph{Tracer temperature}}\\
$(\sigma_1,\sigma_2)=(12,30)\kms$ & 0.3 & 91 & 99.7\%\\
$(\sigma_1,\sigma_2)=(18,40)$ [baseline] & 0.6 & 57 & 98.9\%\\
$(\sigma_1,\sigma_2)=(25,50)$ & 0.5 & 29 & 98.3\%\\
$(\sigma_1,\sigma_2)=(35,60)$ & 0.2 & 16 & 98.8\%\\
isothermal ($\sigma=15$) & 0.0 & 281 & 100\%\\
\midrule
\multicolumn{4}{l}{\emph{Height range}}\\
$|z|_{\max}=0.5\kpc$ & 0.1 & 3.0 & 96.7\%\\
$|z|_{\max}=0.75\kpc$ & 0.3 & 20 & 98.5\%\\
$|z|_{\max}=1.0\kpc$ [baseline] & 0.7 & 73 & 99.0\%\\
$|z|_{\max}=1.5\kpc$ & 0.9 & 209 & 99.6\%\\
$|z|_{\max}=2.0\kpc$ & 0.7 & 438 & 99.8\%\\
\midrule
\multicolumn{4}{l}{\emph{Distribution-function family}}\\
$F$ with 2 free components & 0.7 & 73 & 99.0\%\\
$F$ with 5 free components & 0.4 & 73 & 99.5\%\\
King-like (truncated at $80\kms$) & 0.0 & 303 & 100\%\\
\bottomrule
\end{tabular}
\end{table}

\begin{figure}[h]\centering
\includegraphics[width=0.9\textwidth]{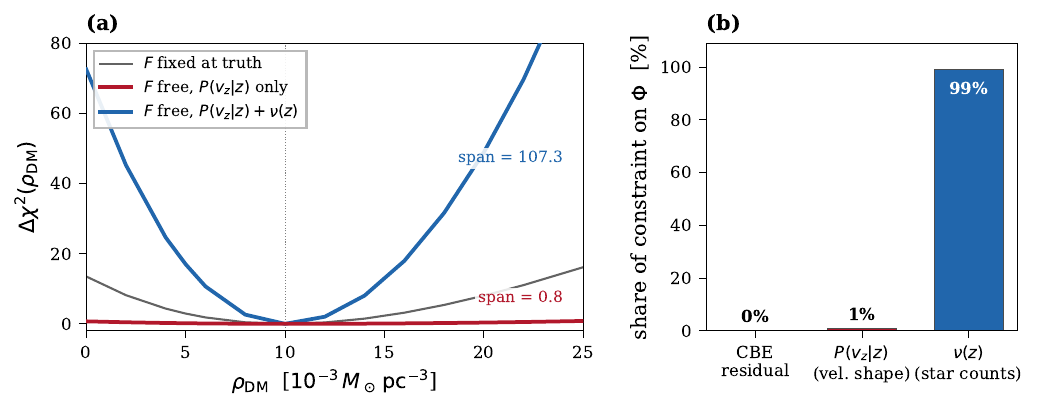}
\caption{Spans are calculated over the interval $\rho_{DM}\in[0,25]\times10^{-3}M\odot\text{pc}^{-3}$, accounting for their slight deviation from corresponding spans reported in Table~\ref{tab:robustness}. Freeing the distribution function destroys the constraint on $\rho_{\rm DM}$; adding the star counts restores it. Left: $\Delta\chi^2$ over
$\rho_{\rm DM}$ for the baseline mock, with the distribution function fixed at
truth (gray), free and scored on the conditional velocity shape alone (red), and free with the spatial density added (blue).
Right: the same three profiles as fractions of the total.}
\label{fig:decomp}
\end{figure}

\textbf{An injected completeness gradient.} Drawing the baseline mock through
$S\propto e^{-|z|/\ell}$ with $\ell=0.6\kpc$ and fitting with $S$ ignored, the
recovered potential converges on $\Phi_{\rm eff}=\Phi+\sigma^2|z|/\ell$: an
isothermal tracer cannot distinguish an unmodeled completeness gradient from a
thin sheet of surface density
$\Delta\Sigma=\sigma^2/2\pi G\ell=25\,\Msun\mathrm{pc^{-2}}$. The recovered
surface density reads $74$ against the predicted $75$, the clean recovery
plus $\Delta\Sigma$; the scale height
collapses to ${\sim}120$~pc; and $\rho_{\rm DM}$ scatters over $0.003$--$0.027$
with complexity and fitting range where the clean recovery is stable at
$0.010$. The loss floor is indistinguishable from the clean run's, so no goodness-of-fit or Pareto criterion detects the bias. This is why Sec.~\ref{sec:method} models $S$ explicitly.

\textbf{A smoothed density estimate.} Estimating $\nu$ with a Gaussian-smoothed
histogram (1.2 cells) deforms $\ln f$ by only $4\%$ at the midplane, where its
curvature peaks, yet pulls the recovered disc--halo split from
$(\Sigma,\rho)=(50,0.010)$ to $(66,0.003)$; the smoothed surface is fit better than the truth fits it. $\Sigma$ integrates the vertical force
while $\rho$ differentiates it, so smoothing $\nu$ moves density out of the halo
term and into the sheet. That is the $\rho(R_0,0)$ deficit of
Sec.~\ref{sec:results}, reproduced here where the truth is known.

\section{Data and the selection model}
\label{app:data}

\begin{figure}[h]\centering
\includegraphics[width=0.9\textwidth]{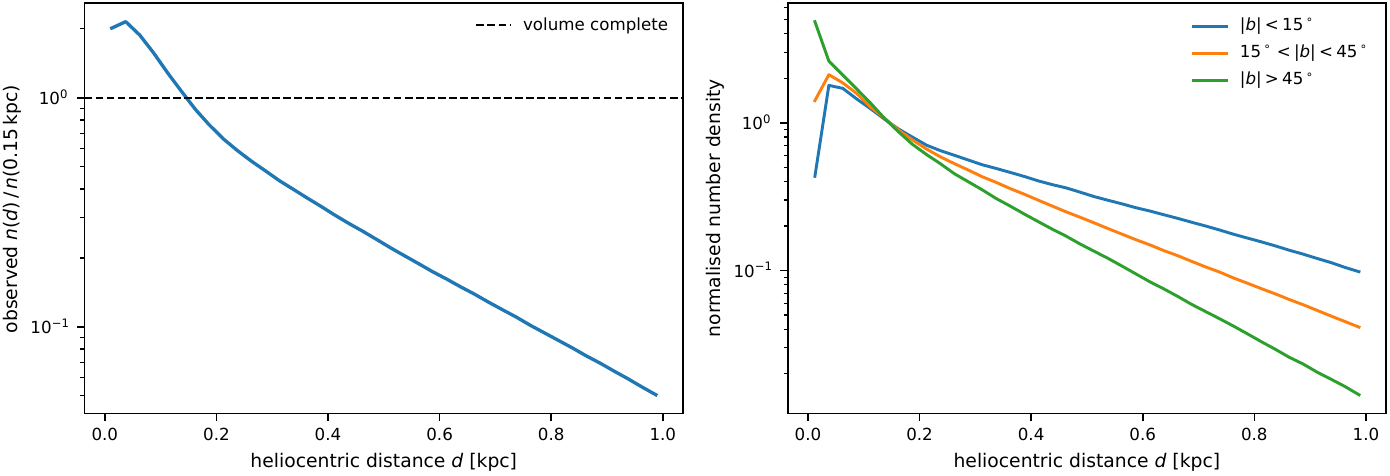}
\caption{Survey completeness dominates the observed density. Left: observed
number density per unit volume against heliocentric distance, normalized at
$0.15\kpc$. It falls by more than an order of magnitude across the ball, driven by a selection gradient $\lvert\mathrm d\ln S/\mathrm dd\rvert$ of $10.9\kpc^{-1}$ against a physical $\lvert\partial_z\ln\nu\rvert$ of $3.0\kpc^{-1}$. Right: the same
profile split by galactic latitude. The three lines separate because the
distance falloff mixes the isotropic selection $S(d)$ with the height-dependent
stratification $\nu(R,z)$; the GLM separates them because stars at equal
distance in different directions lie at different heights.}
\label{fig:selection}
\end{figure}

We adopt $R_0=8.122\kpc$ and $z_\odot=0$: the catalogue is built in a frame
centred on the Sun, so the dynamical midplane offset of Sec.~\ref{sec:method} is
measured against a fixed $z=0$. The GLM
$\mu=V\!\cdot\!S(d)\,A(\ell,b)\,\nu(R,z)$ fits the three factors jointly. The
density factor returns a $334$~pc tracer scale height and a $2.88\kpc$ scale
length, neither tied to literature values. Dust is not separable in distance and direction, so we refit on the dust-poor $|b|>20^\circ$ sight lines: $\partial_z\ln\nu(0.3\kpc)$ moves from $-2.99$ to $-2.67\kpc^{-1}$, an $11\%$ systematic carried forward, and by the hierarchy of Sec.~\ref{sec:degen} that
systematic is what limits $\Phi$.

\begin{figure}[h]\centering
\includegraphics[width=0.85\textwidth]{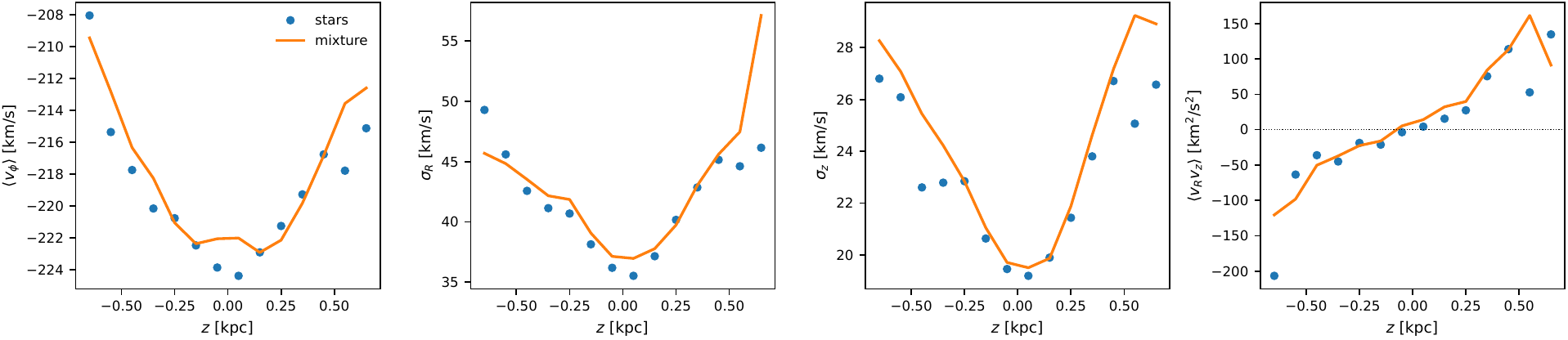}
\caption{The 96-component mixture against conditional velocity moments it was
never shown: held-out stars (points) versus the mixture (line), as functions of
height. The mixture is fitted to individual stars, so the velocity moments serve as an independent check. The mixture reproduces them near the midplane and drifts from
the dispersions beyond $|z|\sim0.4\kpc$, one source of the estimator floor
discussed in Sec.~\ref{sec:method}.}
\label{fig:gmmcheck}
\end{figure}

\section{Ablations and model-selection protocol}
\label{app:abl}

\textbf{Separable potentials.} Constraining $\Phi=\Phi_R(R)+\Phi_z(z)$ costs
$43\%$ in loss at complexity 9--10 against the coupled form of
Eq.~\eqref{eq:phi}. The coupling accounts for the $22\%$ decline of $a_z$ across
the sampled radial range, which a separable form cannot produce. The same cross term removes $E_z$ as a candidate third integral, since $E_z$ is conserved only when $\partial^2\Phi/\partial R\partial z=0$; this excludes the separable case but does not exclude the St\"ackel family as a whole, so a non-zero CBE residual is expected here (Sec.~\ref{sec:degen}).

\textbf{Vertical profile family.} Fitting the vertical term freely among
candidate profiles: $\ln\cosh$ reaches $\chi^2/\mathrm{dof}=275.4$, the
softened modulus $\sqrt{z^2+h^2}-h$ ties at $275.8$, and every form with a
midplane kink (e.g. $|z|$) is an order of magnitude worse. The data favor the
isothermal-sheet family but do not distinguish its two smooth
parameterizations, which agree to the width of the fitted region.

\textbf{Input scaling.} Rescaling the inputs by the measured scale height and
scale length moves the residual from $38.15$ to $23.69\kms\kpc^{-1}$ and
$R^2$ from $0.923$ to $0.937$ at identical complexity: unit choices are worth
as much as ${\sim}10$ complexity points, so all searches run in scaled
variables.

\textbf{Aperture stability.} The measured anisotropy $\sigma_R/\sigma_z$ is
stable to $\pm1\%$ across apertures from $|z|<0.05$ to $|z|<0.30\kpc$, so its
value is not an artifact of the aperture; the spread we report in
Sec.~\ref{sec:results} comes from the choice of front member.

\textbf{Why selection is held out.} The 70/15/15 split of
Sec.~\ref{sec:method} is fixed before fitting because in-sample selection picks
the wrong front member. On mocks with known truth, in-sample $\chi^2$ prefers a
spuriously curved member over the straight one that is true. Held-out $\chi^2$
reverses the choice at every complexity (in-sample $\chi^2/N$ of $1.21$ against
held-out $1.53$ for the line, $0.89$ against $1.71$ for the exponential). An
exponential mimics a line over a bounded range, and only held-out scoring
resists it. Every selection in this work follows the rule: mixture components on
held-out conditional likelihood, $\ln f$ on held-out $R^2$, $\Phi$ on the
screened front.

\textbf{Compute.} The acceleration solve is linear least squares over 94 cells
and runs in minutes on a single CPU node. Each PySR search (potential and
distribution function) runs on one multi-core CPU node in a few hours; the
full set of ablations reported here is ${\sim}10^3$ CPU-core-hours.

\end{document}